\DeclareMathOperator*{\argmax}{arg\,max}
\begin{document}
\title{A Novel Distributed Pseudo TDMA Channel Access Protocol for Multi-Transmit-Receive Wireless Mesh Networks}
\author{\IEEEauthorblockN{Yuanhuizi~Xu, Kwan-Wu~Chin
\IEEEauthorblockA{School of Electrical, Computer and Telecommunications Engineering \\
University of Wollongong, NSW, Australia\\
Email: yx879@uowmail.edu.au, kwanwu@uow.edu.au}}
\and
\IEEEauthorblockA{Sieteng~Soh\\
Department of Computing\\
Curtin University, WA, Australia\\
Email: s.soh@curtin.edu.au}
}
\maketitle

\begin{abstract}
Wireless Mesh Networks (WMNs) technology has been used in recent years for broadband access in both cities and rural areas. A key development is to equip routers with multiple directional antennas so that these routers can transmit to, or receive from multiple neighbors simultaneously. The Multi-Transmit-Receive (MTR) feature can boost network capacity significantly if suitable scheduling policy is applied. In this paper, we propose a distributed link scheduler called PCP-TDMA that fully utilizes the MTR capability. In particular, it activates every link at least once within the shortest period of time. We evaluated the performance of PCP-TDMA in various network topologies, and compared it against a centralized algorithm called ALGO-2, and two distributed approaches: JazzyMAC and ROMA. The results show that PCP-TDMA achieves similar performance with the centralized algorithm in all scenarios, and outperforms the distributed approaches significantly. Specifically, in a fully connected network, the resulting superframe length of PCP-TDMA is less than $^1/_3$ and $^1/_2$ of JazzyMAC and ROMA, respectively. 
\end{abstract}
\IEEEpeerreviewmaketitle

\section{\label{IN}Introduction}
Wireless Mesh Networks (WMNs) are developing quickly and gaining a lot of attention because of their ubiquitous applications. Recently, researchers have discovered the Multi-Transmit-Receive (MTR) capability of WMNs when mesh routers are equipped with multiple radios\cite{raman2P}. This capability allows nodes to transmit to \textit{or} receive from multiple neighbors over the same frequency simultaneously without causing collisions; see Figure \ref{2P}(a)(b). As a result, MTR WMNs have a much higher network capacity than conventional WMNs that use an omni-directional \cite{localgreedyjoo} \cite{maximizingmodiano} or a single directional antenna \cite{kwanSDMA} \cite{patrawildnet} \cite{duttakchannel}. However, nodes are half-duplex, meaning they adhere to the no Mix-Tx-Rx constraint; see Figure \ref{2P}(c).

\begin{figure}[htbp]
\centering
\includegraphics[width=3.5in]{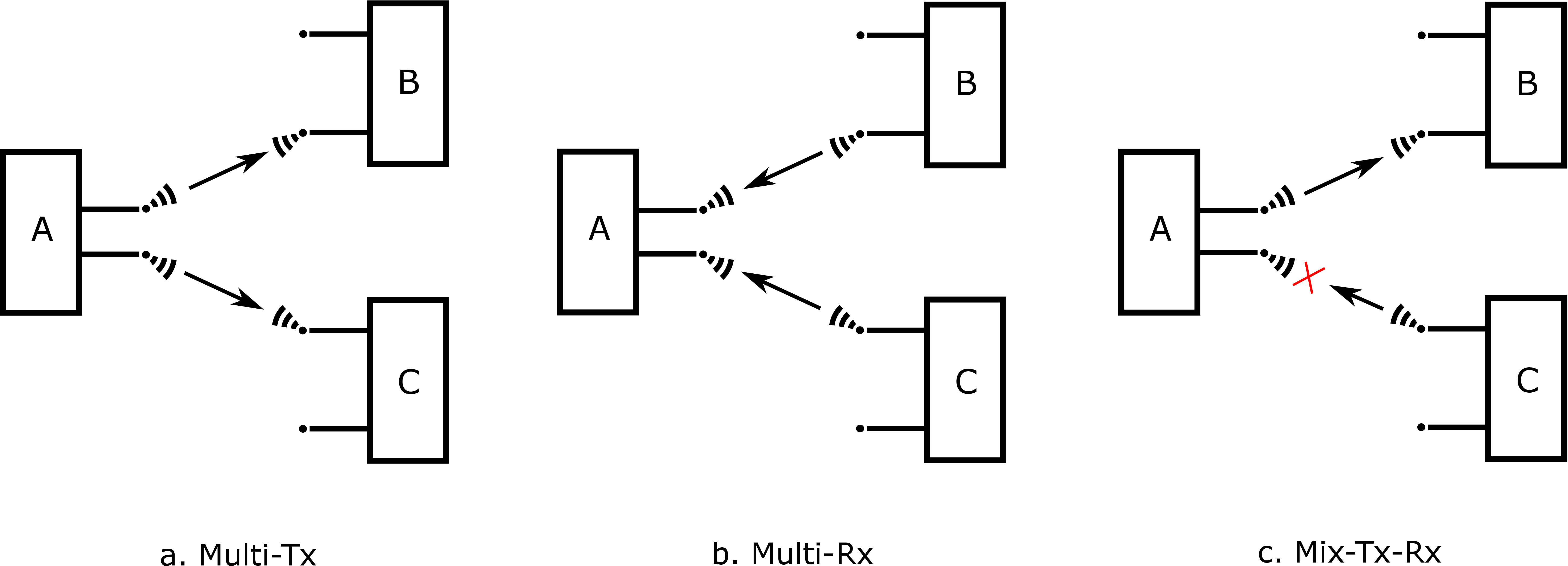}
\caption{MTR capability of node $A$: (a) transmissions, (b) receptions, and the key constraint (c) cannot transmit \textit{and} receive at the same time.}
\label{2P}
\end{figure}

For MTR WMNs to achieve maximum network capacity, the fundamental problem is to design a link scheduler, subject to the Mix-Tx-Rx constraint, ensures the maximum number of links are activated in each time slot. This problem is proved to be a NP-complete, MAX CUT problem in a prior work \cite{kwanmaxcut}. The authors of \cite{kwanmaxcut} provided one solution by using brute force, which is possible for small networks. For larger network, brute force becomes computationally intensive. Currently, the existing polynomial time approaches, such as \cite{raman2P}, \cite{kwanCTR} and \cite{kwanmaxcut}, are primarily centralized which requires a central controller to monitor the number of connection of each node. Based on this information, the controller controls the nodes to transmit or receive in each time slot. In order to collect information from every node, a centralized policy will incur many rounds of signaling overheads, propagation and contention delays. Motivated by these limitations of centralized solutions, we design a simple distributed algorithm called Period Controlled Pseudo-TDMA (PCP-TDMA) to gradually derive the MAX CUT in order to determine the shortest superframe for any MTR networks over time.

To illustrate how PCP-TDMA solves the problem, consider Figure \ref{probEG}. We see that the initial superframe $\mathcal{SF}_a$ has a length of four slots.  All links are activated as per the no mix-Tx-Rx constraint. Over time, we see that the links, e.g., $e_{BC}$ and $e_{CB}$, adapt their transmission slot with the aim of reducing the superframe length, i.e., allocating $e_{BC}$ in slot 2 of $\mathcal{SF}_f$ and $e_{CB}$ in slot 1 of $\mathcal{SF}_k$ shortens the superframe length to 2. From this example, we see that an important goal is to adapt the length of the superframe, which requires nodes to move their allocated slots closer to the start of a superframe. Thus, in this paper, our main contributions:
\begin{figure}[htbp] 
	\centering
	\includegraphics[width=3.4in]{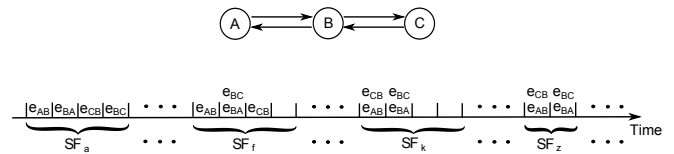}
	\caption{Example topology and schedule}
	\label{probEG}
\end{figure}

\begin{itemize}
\item We establish the network model of MTR WMNs where nodes are equipped with multiple directional antennas and all nodes are able to concurrently transmit or receive on all links. 
\item We propose PCP-TDMA, a simple distributed link scheduler that generates max cut in MTR WMNs. From simulation results, PCP-TDMA achieves similar performance as the centralized algorithm ALGO-2 \cite{kwanmaxcut} in all network scenarios.  
\item As compared to past distributed solutions, e.g., JazzyMAC \cite{nedevJazzyMAC} and ROMA \cite{ROMAbao}, PCP-TDMA provides three advantages over prior methods. First, each node only communicates with its one-hop neighbors, and it does not require any global topological information. Second, it achieves high fairness because each link is guaranteed to activate at least once in a superframe. Third, the superframe generated by PCP-TDMA is shorter than the other distributed solutions.
\item We analyse and prove that PCP-TDMA produces interference-free schedules for arbitrary topologies, and nodes will converge in a limited time.
\end{itemize}

\section{\label{PRE}Preliminaries}
We model a WMN as a connected graph $G (V, E)$ with $|V|$ vertices and $|E|$ edges. Each node $v \in V$ represents a static wireless mesh node, and each edge $e_{uv}$ or $(u, v)$ in $E$ corresponds to a directed link from node $u$ to $v$ in $G$ if and only if the Euclidean distance between $u$ and $v$ is smaller than or equal to the transmission range $r$. Here, each link is supported by a radio and each node $u$ has $b_u \ge |N(u)|$ radios, where $N(u)$ contains the neighbors of node $u$.

Time is divided into slots of equal length, which are sized accordingly to transmit one packet.  Nodes are assumed to be synchronized \cite{GPSsync}; e.g., using GPS.  The superframe is denoted as $\mathcal{SF}$ and consists of up to $P$ edge sets, where $P$ is the superframe length; aka the period.  We define the $i$-th edge set in $\mathcal{SF}$ as $\epsilon_i$, which contains transmitting links that adhere to the no mix-Tx-Rx constraint; note, $\epsilon_i$ can be empty. Hence, a superframe is defined as $\mathcal{SF}=\{\epsilon_i\; |\; i \in \{1, \ldots,P\}\}$. In each superframe, every time slot $s$ is numbered sequentially, whereby $s_i$ represents the $i$-th slot with $i \in \{1, \ldots, P\}$. In addition, we will index the $x$-th superframe as $\mathcal{SF}_x$. 

To avoid interference, nodes need to know the slots that their neighbors use for transmitting and receiving packets.  To this end, all transmitting and receiving slots of a node $v$ are included in a set called $\textbf{Tslot}_v$ and $\textbf{Rslot}_v$, respectively. As an example, consider $\textbf{Tslot}_{A} = \{s_i, s_j\}$. This means node $A$ is scheduled to transmit in slot $s_i$ and $s_j$. How this set is established will be detailed in Section \ref{part1}. We assume each node knows the $\textbf{Tslot}_v$ and $\textbf{Rslot}_v$ of every neighbor $v$. This is reasonable because each node can include this information in all transmitted data packets. As a result, when selecting a transmitting slot for its link, say $(A, B)$, node $A$ can only choose from the set of {\em feasible slots}, which is defined as $\mathcal{S}_{(A, B)} = \{s_1, \ldots, s_P\} \setminus (\textbf{Rslot}_{A} \cup \textbf{Tslot}_{B})$.   In words, we exclude slots used for reception and those used by node B for transmission.  Table \ref{notation} summarizes key notations used throughout this paper.
\begin{table}[htbp]
	\centering
	\begin{tabular}{ |l|l| }
		\hline
		{\em Notation} & {\em Description} \\ \hline
		$G$ & A directed graph \\
		$V$ & A set of nodes or vertices in $G$ \\
		$E$ & A set of directed links or edges in $G$ \\
		$e_{AB}$ or $(A, B)$ & A link with source node $A$ and destination node $B$ in $E$ \\
		$N(u)$ & A set containing node $u$'s neighbors \\
		$\mathcal{SF}_x$ & The $x$-th superframe \\
		$P$ & The length of $\mathcal{SF}$, \textit{aka} the period or superframe length \\
		$s_i$ & The $i$-th slot in $\mathcal{SF}$ \\
		$\textbf{Tslot}_A/\textbf{Rslot}_A$ & A set that contains node $A$'s transmitting/receiving slots \\
		$\rho$ & The probability that a node attempts to reserve a slot again \\
		$T_{O}$ & The duration of a timeout timer started by {\em parent} node \\
		$\mathcal{S}_{(A, B)}$ & A set containing link $(A, B)$'s {\em feasible slots} \\
                          $D_{max}$ & The maximum node degree among all nodes\\
                          $\diameter$ & The network diameter\\
		\hline
	\end{tabular}
	\caption {Key notations}
	\label{notation}
\end{table}

We are now ready to define the problem. Our aim is to derive the shortest possible superframe; i.e., the smallest $P$, in a distributed manner.  That is, given an initial $P$ value, and nodes with MTR capability, design a distributed algorithm that iteratively reduces the superframe length or $P$ value over time. Note, how $P$ is determined initially and adjusted will be discussed in Section \ref{ANA} and \ref{ALGO}, respectively.

\section{\label{ALGO} Period Controlled Pseudo TDMA (PCP-TDMA)}
The basic idea is as follows.  The initial superframe $\mathcal{SF}_1$ has period $P$.  All nodes attempt to reserve a random slot for each of their links.  If a node reserves a slot successfully, it will use the slot for data transmission in the next superframe. Otherwise, if there is a collision, a node will attempt to reserve another random slot in the next superframe. After reserving a slot, say $s_i$, the node will then attempt to improve its current slot $s_i$ by reserving an earlier slot $s_j$, where $j <i$, in the next superframe. This means if the last slot reserved by nodes is $s_c$, where $c<P$, then we can reduce the superframe to $c$; i.e., we update $P$ to $c$. 

Consider Figure \ref{example}.  Assume $|\mathcal{SF}_1|=P=6$. In the first $P$ slots, none of the links are scheduled; i.e., $\mathcal{SF}_1$ consists of six empty edge sets.    Nodes send a RESV message for each of their links; this is indicated by the gray boxes.  Assume the RESV message of all links, except $e_{CB}$, is delivered successfully. Transmission on link $e_{CB}$ fails because $B$'s reception is affected by the transmission on link $e_{BA}$.  From the next superframe onwards, the transmitter of these links will start to transmit data packets, shown as white boxes, in their reserved slot. This means these links are included in the edge sets of $\mathcal{SF}_2$. Additionally, as soon as a node, say $v$, successfully reserves a slot, it marks this slot as its transmitting slot and includes this slot into $\textbf{Tslot}_v$. On the other hand, if a node, say $w$, receives a RESV message in slot $k$, it includes $k$ into $\textbf{Rslot}_w$. 
\begin{figure}[htbp]
	\centering
	\includegraphics[width=3.4in]{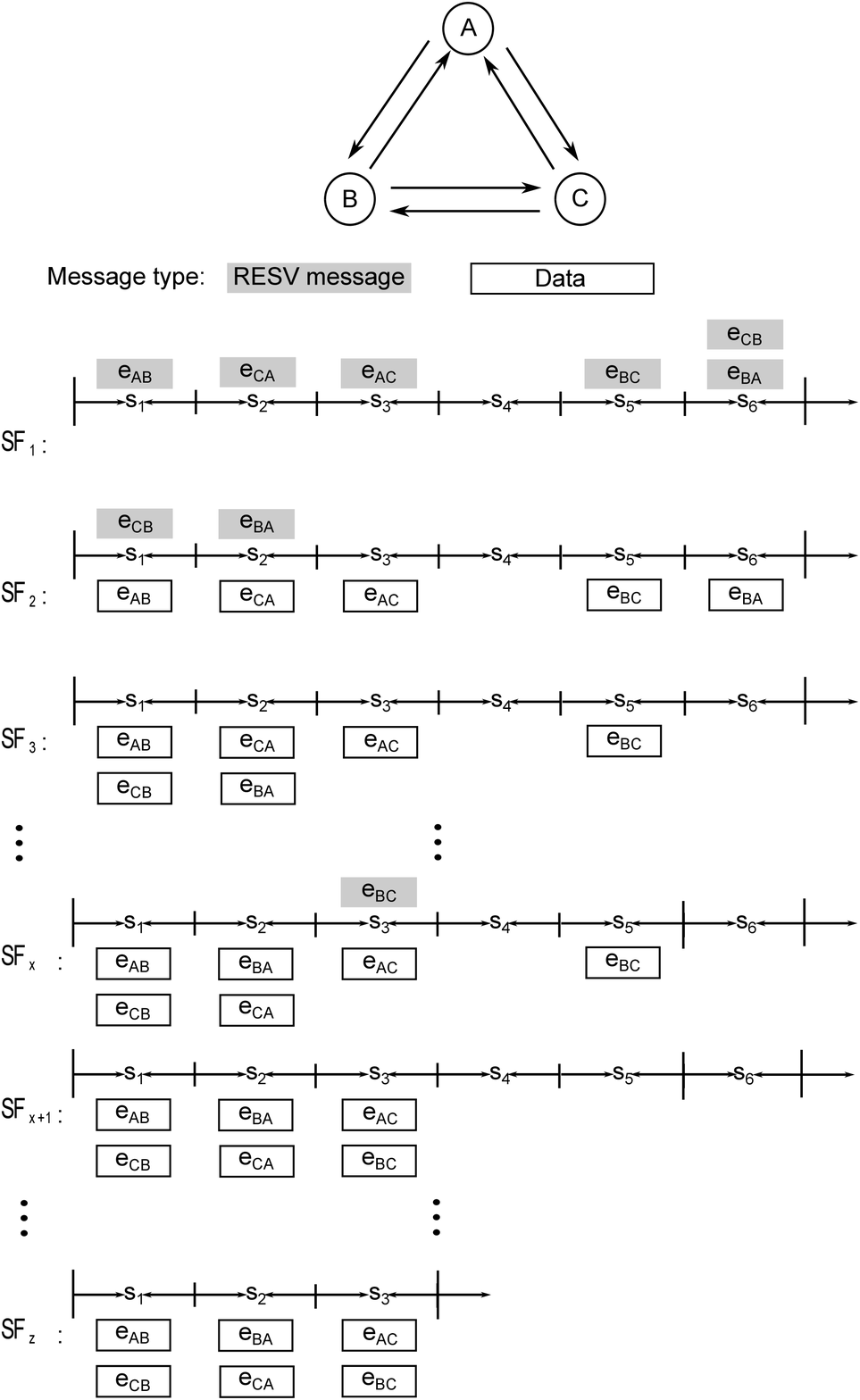}
	\caption{An example of PCP-TDMA}
	\label{example}
\end{figure}
 
As mentioned earlier, node $B$ fails to receive the RESV message over link $e_{CB}$. Consequently, in superframe $\mathcal{SF}_2$, node $C$ sends another RESV message in a random slot in the set $\mathcal{S}_{(C, B)}$, say slot $s_1$, for the unscheduled link $e_{CB}$. As link $e_{CB}$ does not interfere with any link in the set $\epsilon_1$, it is thus included in $\epsilon_1$ of superframe $\mathcal{SF}_3$. 

Nodes can improve the slot of the reserved links by contending for an earlier slot.  Continuing the previous example, we see that link $e_{BA}$ occupies slot $s_6$.  In order to improve $s_6$, node $B$ sends a RESV message for slot $s_2$; note, this slot is chosen randomly from the set $\mathcal{S}_{(B, A)}$. As a result, link $e_{BA}$ is removed from the edge set $\epsilon_6$ and added into the edge set $\epsilon_2$ in superframe $\mathcal{SF}_3$. In addition, nodes $B$ and $A$ will update their set $\textbf{Tslot}_{B}$ and $\textbf{Rslot}_{A}$, respectively, by replacing element $s_6$ with $s_2$. Next, in $\mathcal{SF}_{x}$, node $B$ tries again to reserve an earlier slot, i.e., $s_3$, for link $e_{BC}$. We see that $e_{BC}$ is then scheduled in $s_3$ of superframe $\mathcal{SF}_{x+1}$. 

The next key idea of PCP-TDMA is to reduce the superframe length or $P$ iteratively.  Assume node $B$ has the highest node ID among the three nodes, as node $B$ has reached a state where no link can be shifted to an earlier slot without causing interference, node $B$ is prompted to propose a new period. Assume nodes know the slots reserved by its neighbors, then node $B$ searches for the latest time slot used by itself and its neighbors; i.e., $s_3$ of $\mathcal{SF}_{x+1}$ is used by link $e_{BC}$. Thus node $B$ proposes $P=3$. If all nodes approve this new period, meaning slots after $s_3$ are idle, all nodes set $P$ to three. 

Finally, we see that the resulting schedule converges to $\mathcal{SF}_z$ with a length of three. Formally, we define the {\em converged state} as follows: 
\theoremstyle{definition}
\newtheorem{PA}{Definition}
\begin{PA}
\label{defconv}
The system reaches the {\em converged state} when all nodes are unable to improve the current reserved slot of all their links. 
\end{PA}
Referring to Definition \ref{defconv}, the superframe $\mathcal{SF}_z$ and all subsequent superframes will have a period of three. Note, if the topology changes, i.e., a new node joins, then a new superframe will have to be regenerated. However, we do not expect this to happen frequently as nodes are primarily static in a WMN.  Having said that, in Section \ref{ALGO}, we will discuss how a new node can be incorporated into an existing WMN. 

In the foregone example, we see that PCP-TDMA needs to address the following four sub-problems. Firstly, nodes need to improve their reserved slots over time. Secondly, it is necessary for nodes to reduce the probability of collisions when reserving a random slot. Thirdly, nodes need to determine the last reserved slot. Fourthly, it is important that {\em all} nodes update their period to the same value and reduce to the shortest possible $P$.

PCP-TDMA consists of two parts: slot reservation and period minimization. In the first part, nodes send RESV messages to move the activation time of their links nearer to the start of each superframe in order to fully utilize the time slots at the front side of a superframe. In the second part, nodes communicate with their neighbors to inform each other about the idle slots located at the end of a superframe, and then remove these idle slots to shorten the period $P$. 

\subsection{\label{part1} Part-1: Slot Reservation}

In this part, nodes aim to improve their current transmission slots by attempting to reserve earlier slots. Figure \ref{fpart1} shows the state diagram of the slot reservation process.  Initially, nodes are in the ``\textit{Start}'' state. Assume link $e_{AB}$ of node $A$ currently has slot $s_i$.  Node $A$ will attempt to reserve a random slot $s_j$ in $\mathcal{S}_{(A, B)}$, where $j<i$, by sending a RESV message to node $B$ in slot $s_j$.  The RESV message is sent with a probability of $\rho=\frac{i}{P}$; recall that  $i$ is the slot index number, and $P$ is the current superframe length.  Observe that $\rho$ is biased towards links with a bigger slot number; i.e., those near the end of the current superframe will have a higher priority to move to an earlier non-conflicting slot. 

\begin{figure}[htbp]
\centering
\includegraphics[width=3in]{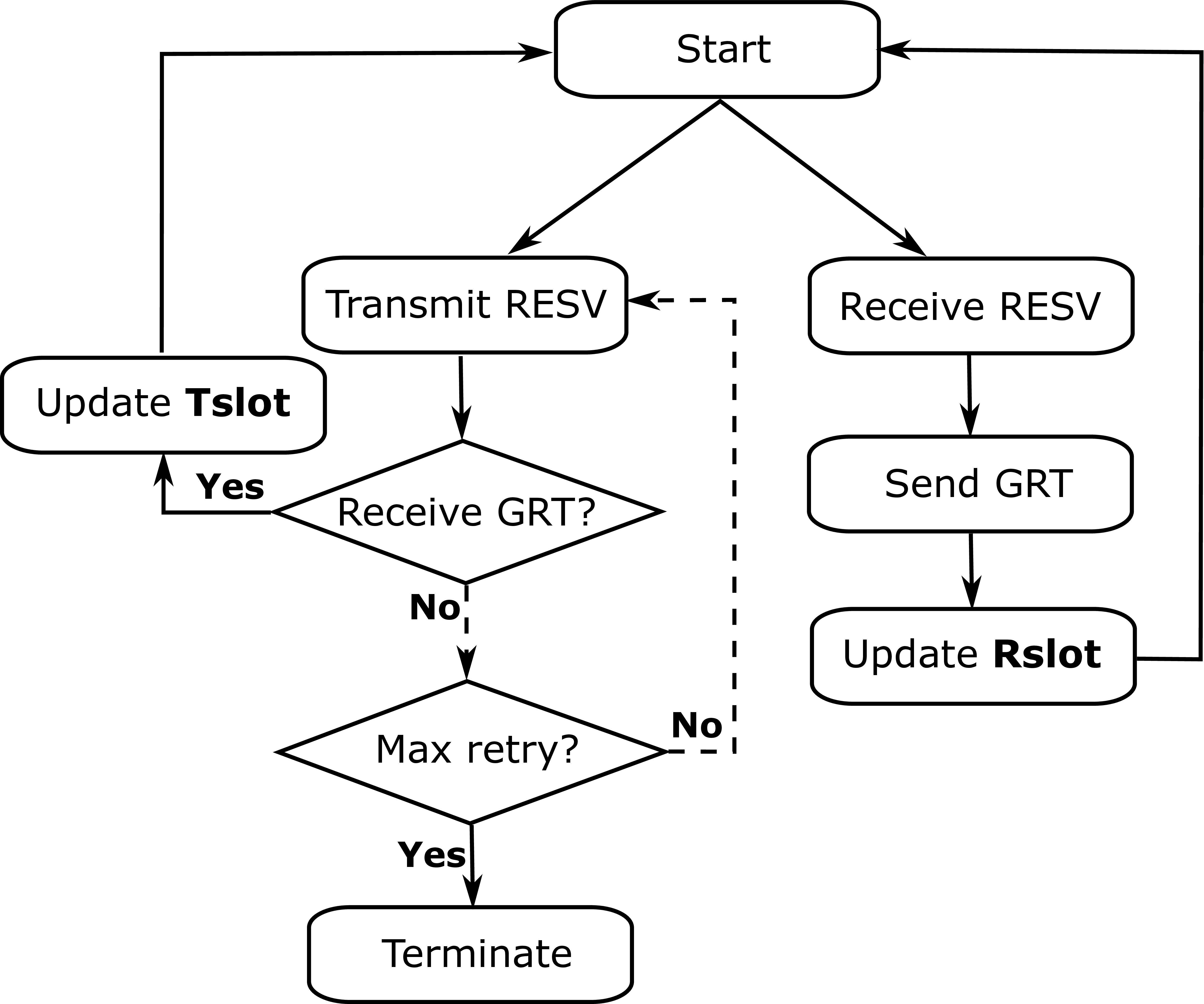}
\caption{State diagram for PCP-TDMA's slot reservation process}
\label{fpart1}
\end{figure}

A node, say $B$, that receives a RESV message moves into the state ``\textit{Receive RESV}''.   Assume node $B$ receives a RESV message without any conflict.  It then replies immediately with a grant or GRT message. Node $B$ then updates $\textbf{Rslot}$  to record slot $s_j$ as its receiving slot; i.e., it replaces slot $s_i$ in $\textbf{Rslot}_{B}$ with $s_j$.  After that node $B$ goes back to the ``\textit{Start}'' state. When node $A$ receives the GRT message from $B$, it moves to the ``\textit{Update $\textbf{Tslot}$}'' state to mark slot $s_j$ as its transmitting slot by replacing slot $s_i$ with $s_j$ in the set $\textbf{Tslot}_{A}$. It then moves back to the ``\textit{Start}'' state. 

If node $B$ experiences a collision, i.e., it did not receive the RESV message from $A$, then there will be no GRT message.  In this case, node $A$ concludes that the reservation has failed. It thus retains the current transmitting slot $s_i$ for link $e_{AB}$.  Node $A$ will either go back to the ``\textit{Transmit RESV}'' state to retransmit a RESV message with probability $\rho$ in a random slot from $\mathcal{S}_{(A, B)}$ in the next superframe, or go to the ``\textit{Terminate}'' state. The state node $A$ chooses depends on whether it has tried to transmit a RESV message for a given maximum retry threshold.  We set the retry limit to $|\mathcal{S}_{(A, B)} \setminus \{s_i, \ldots, s_P\}|$, where $s_i$ is the current reserved slot. This allows nodes to try to reserve in every earlier slot in $\mathcal{S}_{(A, B)}$ before it terminates the slot reservation process. Once in the ``\textit{Terminate}'' state, a node no longer tries to move its current slots.

\subsection{\label{part2}Part-2: Period Minimization}
This part consists of two stages: new $P$ proposal and its confirmation. The aim is for nodes to learn the shortest feasible period and to update their current period. Eventually, all nodes in the network will use the same shortest $P$, and the superframe period can no longer be shortened. 

\subsubsection{\label{stage1}Stage-1: New $P$ Proposal}

We first explain how nodes propose a new period. To reduce signalling overheads, only nodes with the highest ID among all their neighbors have the right to propose.   Assume that node $A$ is such a node.  After reaching the ``\textit{Terminate}'' state in Part-1, it searches for the largest slot that is occupied by a transmitting link.  Formally, we have
\begin{equation}
	\label{P'}
            	P' = \argmax_{k\in \{1,\ldots,P\}} (s_k\cap\{\textbf{Tslot}_{u} \cup \textbf{Rslot}_{u}\} \neq \emptyset)
\end{equation}
where $u \in \{N(A) \cup A\}$.  Node $A$ compares $P'$ against the current period $P$. If $P'< P$, then node $A$ becomes the root node. It sends a PROP\{$A$, $P'$\} message to its neighbors. The message includes its ID and the proposed period $P'$.

In the sequel, we will need the following definition of {\em parent} and {\em child}.  A {\em parent} of a node $A$ is defined as the neighbour that has transmitted a PROP message to $A$. All other nodes in $N(A)$ are known as the {\em children} of node $A$.  For each PROP message, a node will keep a separate record of the corresponding parent and child nodes. In addition, after transmitting a PROP to every child, a node starts a timer called $T_{O}$. The duration of $T_{O}$ is a design parameter that can be changed according to traffic requirements or network topology. 

When a node, say $C$, receives a PROP \{$A$, $P'$\} message from its neighbor $B$, node $C$ needs to determine whether to accept or reject the proposed period $P'$. This process is illustrated by the state diagram shown in Figure \ref{part2stage1}. Upon receiving a PROP message, node $C$ will record neighbor $B$ as a {\em parent}. Then node $C$ needs to determine whether it is a duplicated PROP message.  To do this, node $C$ checks the following two elements contained in the PROP message: ID and $P'$. If the ID of the received PROP message matches the ID contained in a previously received PROP message, and these two PROP messages have the same $P'$ value, then the newly received PROP message is a duplicate. Node $C$ discards the duplicated PROP message and will not reply to parent $B$.

\begin{figure}[htbp]
\centering
\includegraphics[width=3in]{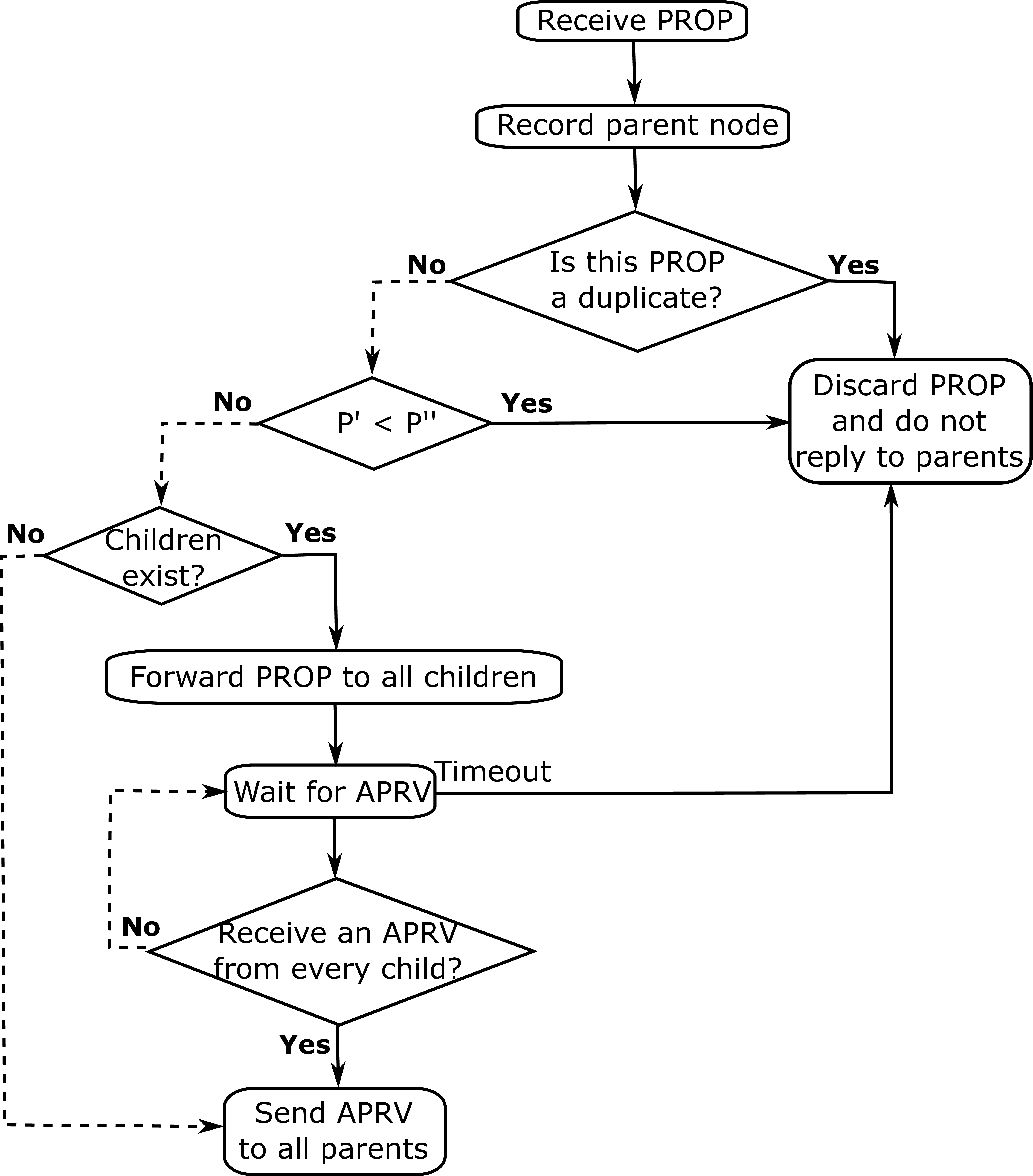}
\caption{The propagation of a PROP message}
\label{part2stage1}
\end{figure}

On the other hand, if the PROP message is new, then $C$ will determine the validity of the proposed period $P'$ as follows. Node $C$ first finds its largest occupied time slot $s_l'$.  That is, 
\begin{equation}
	\label{P''}
	P''=\argmax_{k\in\{1,\ldots,P\}}(s_k\cap\{\textbf{Tslot}_{u}\cup\textbf{Rslot}_{u}\} \neq \emptyset)
\end{equation}
where $u \in \{N(C) \cup C\}$. Node $C$ compares $P'$ against $P''$ to determine whether the proposed period $P'$ can be accepted by node $C$. If $P'$ is smaller than $P''$, meaning $P'$ cannot be node $C$'s new period, then node $C$ discards the message PROP \{$A$, $P'$\} and does not reply to any {\em parent}. On the contrary, if $P' \ge P''$, node $C$ approves $P'$ and forwards the PROP message to all its {\em children}, if there are any. Node $C$ then enters the state ``\textit{Wait for APRV}'', where $C$ expects all its {\em children} to send a message APRV\{$A$, $P'$\} back as an approval of the proposed period $P'$. If $C$ does not receive an APRV message from every child within a duration of $T_{O}$, a  Timeout  event occurs. This causes node $C$ to discard the message PROP \{$A$, $P'$\} and not to reply with APRV. However, if $C$ collected every APRV before a Timeout event, then node $C$ goes to the last state ``\textit{Send APRV to all parents}''.

\subsubsection{\label{stage2}Stage-2: New $P$ Confirmation}

A root node, say $A$, that has successfully collected an APRV message from all its neighbors (or children) in Stage-1 proceeds to Stage-2. The aim of this stage is to inform all nodes the approved period and the start time of the new superframe that has this period. The key challenge is to have all nodes start the new superframe at the same time. 

In this stage, any node, say $C$, uses a message called UPDATE \{$A$, $P'$, $t$, $\tau_{C}$\} to inform its children that the root node $A$ is going to start a superframe with period $P'$. Here, the element $t$ is a time stamp (e.g., unix epoch timestamp) of when this UPDATE message is generated by the root node, and $\tau_{C}$ indicates the time slot that node $C$ begins using the new  $P'$ instead of the current period $P$. Here, all four elements are important because they are also used to guarantee the uniqueness of each UPDATE message.

Now we explain how node $C$ calculates its starting slot $\tau_{C}$. Assume node $C$ will send the UPDATE \{$A$, $P'$, $t$, $\tau_{C}$\} message to its children in $\mathcal{SF}_{y}$. The value of $\tau_{C}$ must satisfy the following two requirements. 
\begin{enumerate}
	\item {\bf Requirement-1}: $\tau_{C}$ must be a slot in $\mathcal{SF}_{y+2}$. This is because node $C$ requires one superframe $\mathcal{SF}_{y}$ to send an UPDATE message to all children and another superframe $\mathcal{SF}_{y+1}$ to receive an ACK from all its children. 
	
	\item {\bf Requirement-2}: $\tau_{C}$ must be $n \times P'$ slots after the starting slot of $C$'s parent, where $n \in \mathbb{N}$. This is to ensure that $C$ starts the new superframe with period $P'$ simultaneously with its parent. Here, the new superframe $\mathcal{SF}_{y+2} = \{\epsilon_i\; |\; i \in \{1, \ldots,P'\}\}$, where each edge set $\epsilon_i$ in $\mathcal{SF}_{y+2}$ is equal to $\epsilon_i$ in $\mathcal{SF}_{y}$.
\end{enumerate}

We now use Figure \ref{part2stage2} to explain how a new period is confirmed and updated by every node. Firstly, a root node, say $A$, will carry out the steps on the left branch. It sends the message UPDATE\{$A$, $P'$, $t$, $\tau_{A}$\} to all its children and waits for their ACK. At time slot $\tau_{A}$, if $A$ has received an ACK from every child, node $A$ goes to the last state in the left branch; i.e., ``\textit{Start new superframe with period of $P'$ in slot $\tau_{A}$}''. Otherwise, node $A$ returns to the sending UPDATE state at the beginning of the left branch after ``\textit{Recalculate $\tau_{A}$}''. Here, the starting slot $\tau_{A}$ is recalculated as the first slot after two superframes; cf. Requirement-1.

\begin{figure}[htbp]
	\centering
	\includegraphics[width=3.4in]{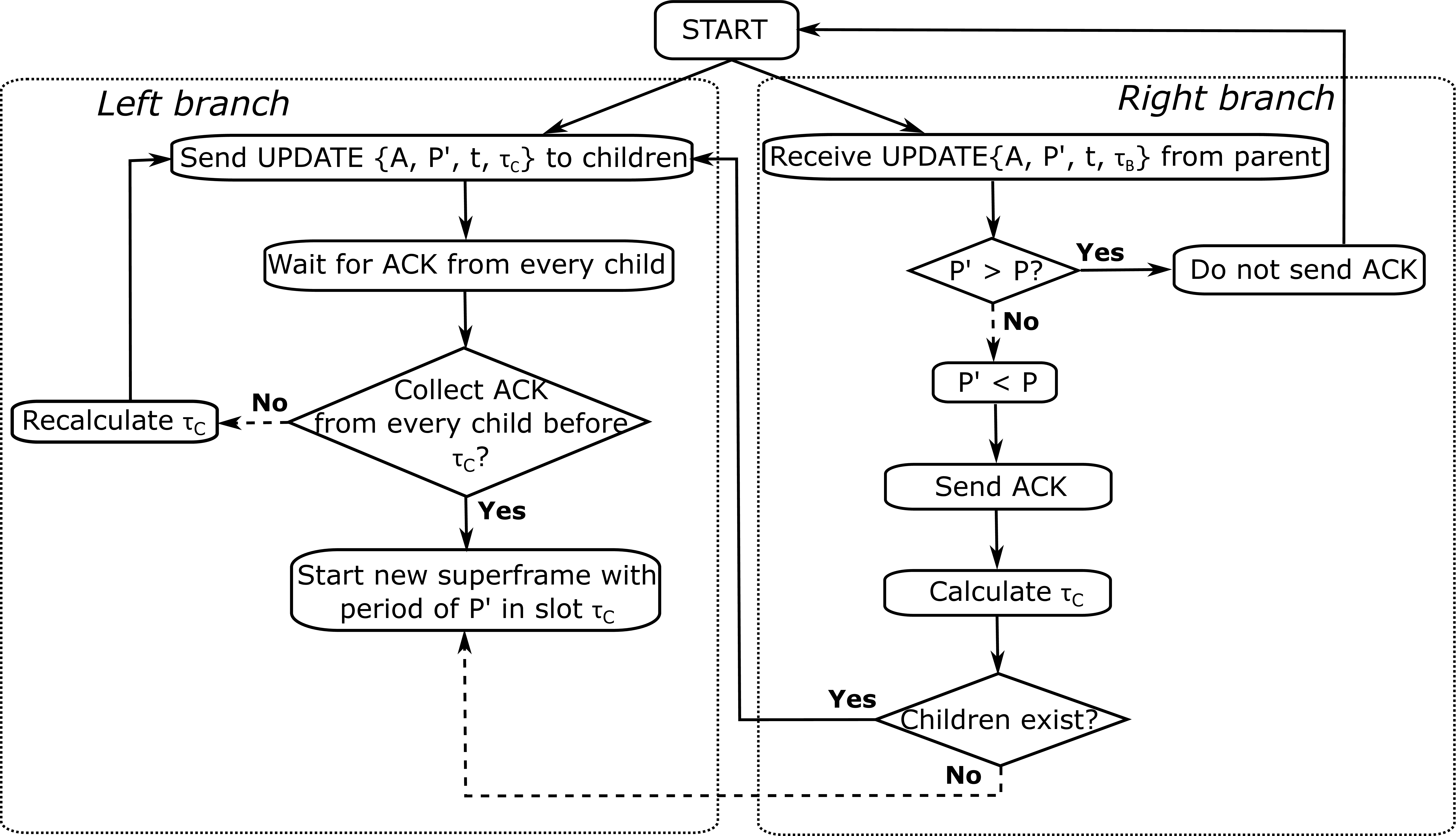}
	\caption{Confirmation of a new period}
	\label{part2stage2}
\end{figure}
 
Next, we describe how a node confirms and updates the new period $P'$ when it receives an UPDATE message. According to the right branch of Figure \ref{part2stage2}, if a node, say $C$, receives an UPDATE \{$A$, $P'$, $t$, $\tau_{B}$\} message from its parent $B$, node $C$ acquires the following information: parent $B$ is going to start a new superframe with period $P'$ from slot $\tau_{B}$ onwards.  
	
First, node $C$ compares $P'$ against the current period $P$. If $P'$ is bigger than $P$; i.e., $P' > P$, node $C$ does not send an ACK to its parent and it will not change its period. Otherwise, if $P' < P$, node $C$ goes through the remaining states in the right branch. It first responds to parent $B$ with an ACK to inform $B$ that the period update event has been noted. Then node $C$ will calculate its own starting slot $\tau_{C}$ for the new period. If node $C$ has children, it moves to the left branch to inform its children. Otherwise, if $C$ has no children, it goes to the last state in the left branch where $C$ starts the new superframe with period of $P'$ from $\tau_{C}$ onwards. With the propagation of the UPDATE messages, the period update event occurs at each node in the network. Finally, all nodes conform to the same period $P'$.

One question that arises is that what if node $C$ receives an UPDATE message whose $P'$ equals to the current period $P$. Although superframe length does not change, a different starting slot contained in the newly received UPDATE message leads to a different starting slot $\tau_{C}$ for node $C$. Thus, to ensure that every node starts the new superframe simultaneously, node $C$ will check the time stamp $t$ of this UPDATE message. Nodes will adopt this UPDATE message if it is older because this means it has existed for a longer time period and thus, covers more nodes. In the event that the time stamp is the same, nodes will accept the UPDATE if it contains a higher root node ID. Otherwise, the UPDATE message will be discarded.

\subsubsection{\label{newnodes}New Nodes}
Whenever new nodes join, the current period may need to be re-adjusted to ensure that new links can be scheduled without interference. In addition, we also need to make sure that existing links remain unaffected by the introduction of new links.

Assume there is a new node $F$.  Let $E$ be its neighbour that is already connected to the network.  Node $E$ will transmit its current schedule and current period $P$ to node $F$ in a random slot, say $s_r$, in $\textbf{Tslot}_E$. Node $F$ will then record $s_r$ in $\textbf{Rslot}_F$. Upon receiving node $E$'s schedule, node $F$ inspects the schedule of $E$, and sends a RESV message to node $E$ in a random {\em feasible} slot $s_t$ in $\mathcal{S}_{(F, E)}$. If node $F$ receives a GRT message from $E$, then the slot reservation is successful. Node $F$ and $E$ add the reserved slot $s_t$ into the set $\textbf{Tslot}_F$ and $\textbf{Rslot}_E$, respectively.

In the case where no {\em feasible} slots are available, meaning every slot has a conflict with node $F$'s outgoing links, then node $F$ needs to expand the current superframe by one slot. Specifically, the new superframe needs to have a period of $P+1$, where the edge sets $\{\epsilon_i | i \in \{1, \ldots,P\}$ in the new superframe are equal to that of the current superframe, and the one extra edge set $\epsilon_{P+1}$ contains all $F$'s unscheduled links. 

To expand the superframe, node $F$ sends a JOIN\{$P+1$, $\tau_0$\} message to an existing  neighbor node at random.  Assume this neighbour to be $E$.  It then becomes a root node and sends a EXP\{$E$, $P+1$, $t$, $\tau_{E}$\} message to its children, where $E$ is the root node ID, $P+1$ is the new period, $t$ is the time stamp and $\tau_{E}$ is the starting slot of the new superframe. This message is propagated to all other nodes in the network; see Figure \ref{newnode}.  Observe that the depicted process is similar to how an UPDATE message is processed in Section \ref{stage2}, except that nodes do not have to verify the validity of the proposed period $P+1$.

\begin{figure}[htbp]
	\centering
	\includegraphics[width=3.4in]{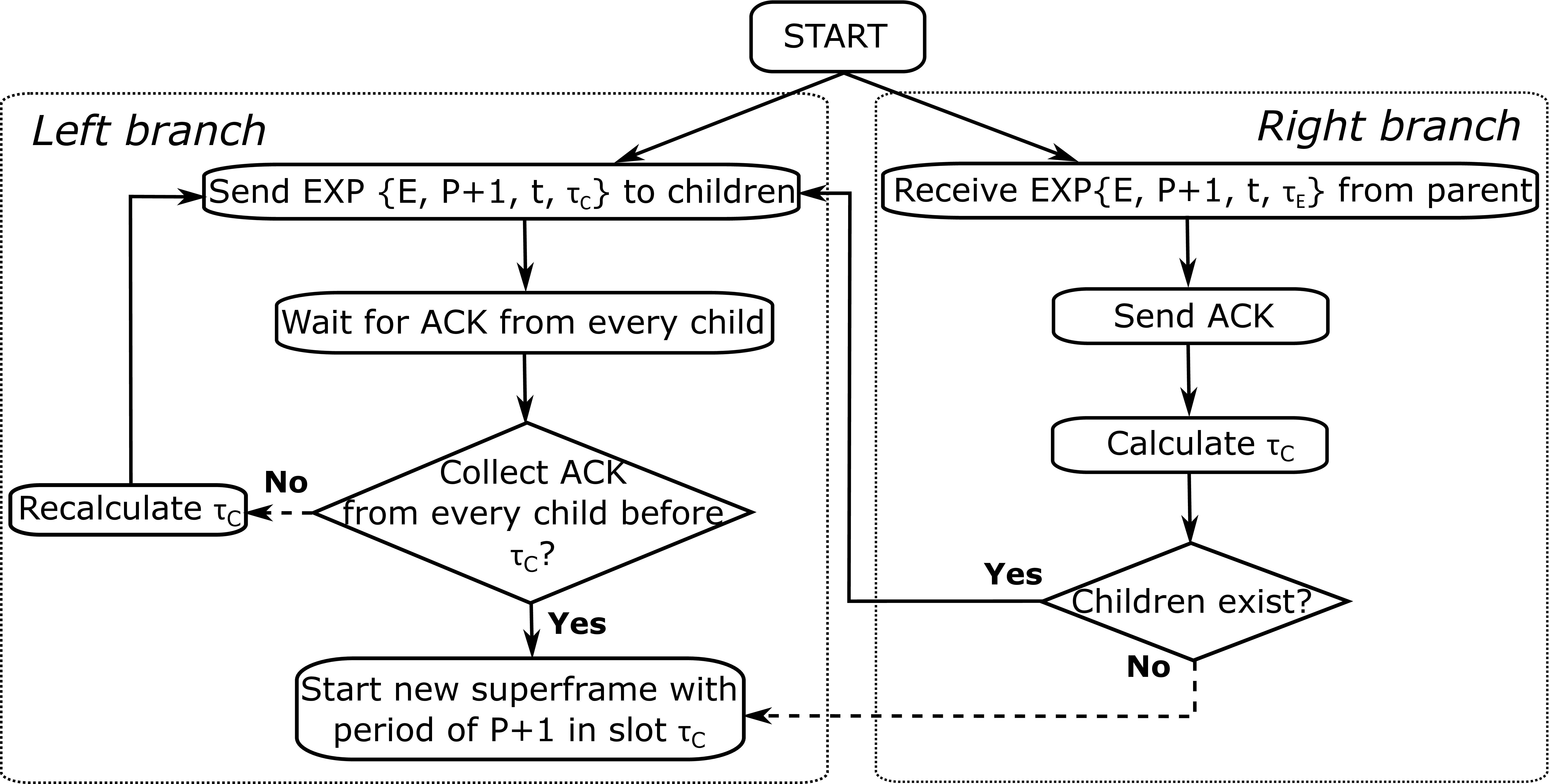}
	\caption{Re-adjusting the period when a new node joins}
	\label{newnode}
\end{figure}

\section{\label{ANA}Analysis}
In this section, we analyze several properties of PCP-TDMA, including the configuration of the initial period, the correctness of the schedule, the self-stabilizing feature of the algorithm, and the time required for Part-2 of PCP-TDMA to finish. 
\newtheorem{proposition}{Proposition}
\begin{proposition}
\label{iniP} 
Given an arbitrary topology, with a maximum node degree $D_{max}$, setting the initial period $P_i$ to at least $2\times D_{max}$ guarantees each link will reserve a slot.
\end{proposition}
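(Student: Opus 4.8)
The plan is to reduce the claim to a counting argument: I want to show that the feasible-slot set $\mathcal{S}_{(A,B)}$ is never empty for any link $e_{AB}$, no matter which other links have already been scheduled when $A$ attempts its reservation. Recall that $\mathcal{S}_{(A,B)} = \{s_1,\ldots,s_P\}\setminus(\textbf{Rslot}_A\cup\textbf{Tslot}_B)$, so it suffices to prove the bound $|\textbf{Rslot}_A\cup\textbf{Tslot}_B| \le 2D_{max}-1$ at every instant of the Part-1 reservation process. Once that bound is in hand, setting $P \ge 2D_{max}$ leaves at least one feasible slot, and node $A$ can reserve it for $e_{AB}$.

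First I would identify exactly which links can push a slot into the forbidden set $\textbf{Rslot}_A\cup\textbf{Tslot}_B$. A slot lies in $\textbf{Rslot}_A$ only because some incoming link $e_{CA}$ with $C\in N(A)$ has reserved it, and a slot lies in $\textbf{Tslot}_B$ only because some outgoing link $e_{BC}$ with $C\in N(B)$ has reserved it. Since every node has at most $D_{max}$ neighbours, there are at most $|N(A)|\le D_{max}$ such incoming links and at most $|N(B)|\le D_{max}$ such outgoing links. Moreover, because the MTR capability lets several of these links share one slot, the number of occupied slots never exceeds the number of occupying links, so I only ever need to count links.

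The key step, and the only place where the exact constant $2D_{max}$ rather than $2D_{max}+1$ is earned, is to observe that the reverse link $e_{BA}$ is simultaneously an incoming link of $A$ and an outgoing link of $B$. It is therefore double-counted in the naive sum $|N(A)|+|N(B)|$, so the number of \emph{distinct} links that can contribute a forbidden slot is at most $|N(A)|+|N(B)|-1 \le 2D_{max}-1$. Hence $|\textbf{Rslot}_A\cup\textbf{Tslot}_B|\le 2D_{max}-1$, and $|\mathcal{S}_{(A,B)}| \ge P-(2D_{max}-1) \ge 1$ whenever $P \ge 2D_{max}$, which closes the argument.

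I expect the main obstacle to be this tightness bookkeeping rather than the overall structure: a careless union bound only gives $|\textbf{Rslot}_A\cup\textbf{Tslot}_B|\le 2D_{max}$, which would force $P \ge 2D_{max}+1$ and miss the stated constant. One must argue carefully that the reverse link $e_{BA}$ can never add a fresh forbidden slot beyond the $2D_{max}-1$ budget: if $e_{BA}$ is currently scheduled it contributes a single slot that belongs to both $\textbf{Rslot}_A$ and $\textbf{Tslot}_B$, and if it is not yet scheduled it contributes none. Checking that this bound is invariant across every reservation and every slot-improvement step of Part-1, and not merely at one static snapshot, is the detail that deserves the most care.
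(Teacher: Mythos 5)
Your proposal is correct and follows essentially the same route as the paper: bound $|\textbf{Rslot}_A\cup\textbf{Tslot}_B|$ by counting $A$'s incoming links and $B$'s outgoing links, and recover the constant $2D_{max}$ by noting that the reverse link $e_{BA}$ is counted twice, giving $P-(2D_{max}-1)\ge 1$. Your version is somewhat more careful than the paper's (explicitly treating the case where $e_{BA}$ is unscheduled and noting the bound must hold throughout Part-1), but the decomposition and key observation are identical.
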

\begin{proof}
Consider a link $(A, B)$.   To schedule this link without interference, the following inequality must be true:
\begin{equation}
	\mathcal{S}_{(A, B)} \neq \emptyset
\end{equation}
Equivalently, 
\begin{equation}
	\{s_1, \ldots, s_P\} \setminus (\textbf{Rslot}_{A} \cup \textbf{Tslot}_{B})\neq \emptyset
\end{equation}
This indicates that the number of feasible slots must be greater than zero. Since the values are all integers, we have the following inequality,  
\begin{equation}
	P-|\textbf{Rslot}_{A} \cup \textbf{Tslot}_{B}| \geq 1
\end{equation}
If both node $A$ and $B$ have $D_{max}$ neighbors, that means $A$ has $D_{max}$ incoming links,  and $B$ has $D_{max}$ outgoing links. In the worst case, all these said links are scheduled in a distinct time slot. Consequently, we have $|\textbf{Rslot}_{A}|=|\textbf{Tslot}_{B}|=D_{max}$. Note, link $(B, A)$ is counted twice. Thus, we have
\begin{equation}
	P-(2\times D_{max}-1) \geq 1
\end{equation}
Hence, we obtain $P \geq 2\times D_{max}$ ensures that all links have at least one feasible slot to reserve, which proves the proposition. 
\end{proof}
\begin{proposition} 
PCP-TDMA produces an interference-free schedule.
\end{proposition}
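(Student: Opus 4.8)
The plan is to reduce the statement to a single per-node invariant and then show every operation of PCP-TDMA preserves it. Under the MTR model the only prohibited configuration is the no Mix-Tx-Rx case, so a schedule is interference-free precisely when no node acts as both a transmitter and a receiver in the same slot; formally, when $\textbf{Tslot}_v \cap \textbf{Rslot}_v = \emptyset$ for every $v \in V$. (Simultaneous transmissions to a common receiver, or simultaneous receptions from a common transmitter, are explicitly allowed by MTR and hence are not interference.) I would open the proof by stating this characterisation, so that the remaining task is to establish the invariant over the whole execution.

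Next I would set up an induction over the sequence of \emph{confirmed} slot reservations, noting that the slot sets are updated only when a reservation is backed by a GRT: node $A$ adds a slot to $\textbf{Tslot}_A$ only upon receiving a GRT, and node $B$ adds a slot to $\textbf{Rslot}_B$ only when it actually receives the RESV and replies. In the base case $\mathcal{SF}_1$, all $\textbf{Tslot}$ and $\textbf{Rslot}$ sets are empty, so the invariant holds vacuously. For the inductive step, consider a confirmed reservation of slot $s_j$ for a link $(A,B)$. By construction $s_j \in \mathcal{S}_{(A,B)} = \{s_1, \ldots, s_P\} \setminus (\textbf{Rslot}_A \cup \textbf{Tslot}_B)$, so $s_j \notin \textbf{Rslot}_A$ and $s_j \notin \textbf{Tslot}_B$. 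Hence setting $\textbf{Tslot}_A \leftarrow \textbf{Tslot}_A \cup \{s_j\}$ keeps $\textbf{Tslot}_A \cap \textbf{Rslot}_A = \emptyset$, and $\textbf{Rslot}_B \leftarrow \textbf{Rslot}_B \cup \{s_j\}$ keeps $\textbf{Tslot}_B \cap \textbf{Rslot}_B = \emptyset$; every other node is untouched. A slot-improvement move is the same reservation with $s_j$ replacing an earlier $s_i$, so the identical argument applies.

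The step I expect to be the main obstacle is concurrency: within one superframe many RESV messages are sent at once, so the feasible sets computed by different links are stale snapshots that do not anticipate one another, and two links may both regard the same slot $s_j$ as feasible. The key fact I would exploit is that the half-duplex RESV/GRT handshake filters out exactly the conflicting cases. A node transmitting a RESV in $s_j$ cannot, by the Mix-Tx-Rx constraint, receive any RESV in $s_j$, so it emits no GRT for a competing request in that slot; conversely, a node grants (and thus adds $s_j$ to its $\textbf{Rslot}$) only if it successfully receives the RESV, which means it is not transmitting in $s_j$, i.e.\ $s_j \notin \textbf{Tslot}$ holds at the instant of granting even when the snapshot was stale. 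I would then enumerate the conflict patterns to confirm no confirmed reservation can violate the invariant: for $A \to B$ together with $B \to A$ in $s_j$, both nodes transmit and neither grants, so both fail; for $A \to B$ together with $C \to A$ in $s_j$, node $A$ is transmitting and cannot grant $C$, so $C \to A$ fails while only the conflict-free $A \to B$ survives. In every case the half-duplex operation guarantees that any reservation which would make a common node simultaneously a transmitter and a receiver is denied its GRT and retried in a later superframe.

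Finally I would verify that period minimization introduces no new interference. By Requirement-2, each edge set $\epsilon_i$ of the shortened superframe $\mathcal{SF}_{y+2}$ equals the corresponding edge set of $\mathcal{SF}_{y}$, so Part-2 merely discards trailing idle edge sets and realigns superframe boundaries without changing any link's slot; the per-node invariant is therefore inherited unchanged, and the same holds for the edge-set-preserving expansion used when a new node joins. Since $\textbf{Tslot}_v \cap \textbf{Rslot}_v = \emptyset$ is maintained at every node throughout the execution and in the converged state, the schedule PCP-TDMA produces is interference-free.
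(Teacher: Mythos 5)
Your proof is correct and rests on exactly the same two facts as the paper's own argument: a reservation can only target a slot in $\mathcal{S}_{(A,B)}$, which by definition excludes the known conflicting slots (the paper's Case~1), and any concurrent conflicting reservation is filtered out by the half-duplex RESV/GRT handshake under the no Mix-Tx-Rx constraint (the paper's Case~2). You merely package these as an explicit per-node invariant $\textbf{Tslot}_v \cap \textbf{Rslot}_v = \emptyset$ maintained by induction over confirmed reservations --- a more careful rendering, especially regarding stale snapshots and Part-2, than the paper's brief two-case contradiction, but not a different route.
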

\begin{proof}
We are only interested in Part-1 (\textbf{Slot Reservation}) of PCP-TDMA because Part-2 (\textbf{Period Minimization}) reduces the length of the superframe without changing the link schedule.  In Part-1, consider a node $A$ and assume links $e_{AB}$ and $e_{xA}$ have reserved the same slot $s_i$. We have two cases to consider:

{\em Case 1}:  Node $A$ transmits a RESV message in slot $s_i$ even though a link $e_{xA}$ exist. Recall that $A$ can only select a transmitting slot from the set $\mathcal{S}_{(A, B)}$. The fact that slot $s_i$ is in both $\mathcal{S}_{(A, B)}$ and $\textbf{Rslot}_{A}$ contradicts the definition of  a {\em feasible slots} set, whereby $\mathcal{S}_{(A, B)} = \{s_1, \ldots, s_P\} \setminus (\textbf{Rslot}_{A} \cup \textbf{Tslot}_{B})$.  

{\em Case 2}: Node $A$ and a neighbor $B$ choose to send a RESV message in time slot $s_i$. As the reservation is successfully, this means node $A$ receives the RESV message from node $B$ while $A$ is sending its RESV message to $B$. This contradicts the no Mix-Tx-Rx constraint. 

In both cases, PCP-TDMA does not generate a schedule with interference, which proves the proposition.
\end{proof}
\begin{proposition}
PCP-TDMA has the property of self-stabilization, which ensures all nodes in the network to end up in a correct state; i.e. the (\em converged state).
\end{proposition}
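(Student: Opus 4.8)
The plan is to establish the two standard ingredients of any self-stabilization argument: \emph{convergence}, meaning that from an arbitrary initial configuration the system reaches the converged state of Definition~\ref{defconv} in finite time, and \emph{closure}, meaning that once in the converged state the system never leaves it. I would take the legitimate set to be exactly those configurations in which every node has reached the ``\textit{Terminate}'' state of Part-1. As already noted in the proof of Proposition~2, Part-2 only rescales the period and never alters which slot a link occupies, so it suffices to analyse Part-1 for the slot assignment itself and to invoke the separate timing analysis for Part-2's termination.

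For convergence I would introduce a monotone variant (potential) function that measures how ``late'' the current schedule is. A natural choice is
\begin{equation}
\Phi = \sum_{e \in E} \sigma(e),
\end{equation}
where $\sigma(e)$ denotes the current reserved slot index of link $e$. Every successful RESV/GRT exchange moves a link from a slot $s_i$ to an \emph{earlier} slot $s_j$ with $j<i$, so each such event strictly decreases $\Phi$ by $i-j\ge 1$. Since $\Phi \ge |E|$ always holds, because each link occupies a slot of index at least $1$, the quantity $\Phi$ is bounded below and can therefore decrease only finitely many times; hence only finitely many successful reservations can ever occur over the whole run.

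Between two successive improvements a link may suffer collisions, but the retry limit $|\mathcal{S}_{(A,B)} \setminus \{s_i,\ldots,s_P\}|$ caps the number of failed RESV attempts before the node moves to ``\textit{Terminate}''. Consequently there is no infinite run of purely failed attempts, so every node eventually settles into ``\textit{Terminate}''. Because this retry bound equals the number of earlier feasible slots, a node that reaches ``\textit{Terminate}'' has no earlier feasible slot left to try, which is precisely the ``unable to improve'' condition of Definition~\ref{defconv}; the terminal configuration therefore coincides with the converged state. For closure I would then argue that once every node is in ``\textit{Terminate}'' and no link can be improved, no RESV that changes a slot is ever generated again, so the slot assignment is a fixed point and the configuration remains in the converged state indefinitely. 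Together with convergence, this yields self-stabilization.

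The delicate step, and the one I expect to be the main obstacle, is the convergence argument in the presence of randomness and contention. The hard part will be ruling out livelock, that is, showing that competing neighbours cannot keep colliding forever and thereby stall progress; I would lean on the finite retry cap together with the strict decrease of $\Phi$ to force termination, establishing the bound either deterministically or with probability one. A secondary subtlety is reconciling termination by retry exhaustion with the ``unable to improve'' wording of Definition~\ref{defconv}, for which I would argue that the chosen retry limit is exactly large enough to exhaust every earlier feasible slot, so that the two notions genuinely coincide.
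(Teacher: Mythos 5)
Your proposal is correct and reaches the same conclusion, but it decomposes the argument differently from the paper. The paper's proof uses a single per-link counter: the maximum retry threshold, which shrinks on \emph{both} outcomes --- on a successful move from $s_i$ to $s_j$ it is reset to $|\mathcal{S}_{(A,B)} \setminus \{s_j,\ldots,s_P\}|$ (strictly smaller since $j<i$), and on a failure it decrements because the tried slot is removed from $\mathcal{S}_{(A,B)}$ --- so it must hit zero and force every node into ``\textit{Terminate}'', which the paper simply identifies with the converged state. You instead split the analysis into two bounds: a global potential $\Phi=\sum_e \sigma(e)$ that strictly decreases on each successful reservation and is bounded below by $|E|$, limiting the total number of improvements network-wide, plus the retry cap to rule out infinite runs of failures between improvements. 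Your version buys an explicit global termination bound (at most $|E|\cdot(P-1)$ improvements in total, something the paper's purely local counter does not directly give) and, importantly, you separately argue \emph{closure}, which the paper omits entirely. The one caveat you already flag yourself is real and is shared with the paper's proof: terminating by retry exhaustion is not literally the same as ``unable to improve,'' since a slot discarded after a collision could later become feasible again when a conflicting neighbour moves; both your argument and the paper's quietly identify the two notions via the choice of retry limit rather than proving they coincide. Your proposal is at least as rigorous as the paper's on every other point.
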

\begin{proof}
We show that nodes using PCP-TDMA reach the {\em converged state}. In Part-1, a node, say $A$, iteratively attempts to replace the current reserved slot $s_i$ with a random slot $s_j$ from the set $\mathcal{S}_{(A, B)}$ for its link $e_{AB}$, where $j<i$. If the attempt is successful, the maximum retry limit is updated to $|\mathcal{S}_{(A, B)} \setminus \{s_j,\ldots, s_P\}|$. Otherwise, if this attempt fails, the retry limit becomes $|\mathcal{S}_{(A, B)} \setminus \{s_i,\ldots, s_P\}|-1$ because $s_j$ is removed from $\mathcal{S}_{(A, B)}$. Thus, the max retry threshold is guaranteed to decrease to zero at some time, meaning node $A$ will eventually move to the ``\textit{Terminate}'' state. Note, this ``\textit{Terminate}'' state is equivalent to the {\em converged state} because nodes no longer change their transmitting and receiving slots. Therefore, PCP-TDMA is self-stabilizing because all nodes are guaranteed to reach the {\em converged state}.
\end{proof}
\begin{proposition} 
The number of slots, denoted as $\sigma$, required by Part-2 of PCP-TDMA in an arbitrary network with diameter $\diameter$ is bounded by $2\times P \leq \sigma \leq 4\times \diameter \times P$.
\end{proposition}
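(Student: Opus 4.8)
The plan is to decompose Part-2 into its two stages and count, for each, the number of superframes consumed as control messages propagate along the spanning tree rooted at the proposing (root) node. Since each superframe occupies $P$ slots, multiplying the superframe count by $P$ yields the slot count $\sigma$. The central structural observation I would use is that the distance from the root to any node along this tree is at most the network diameter $\diameter$, so every propagation sweep — downward for PROP/UPDATE and upward for APRV/ACK — traverses at most $\diameter$ hops.

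For the upper bound, I would treat the two stages separately. In Stage-1, the PROP message travels from the root down to the deepest node, and the APRV messages travel back up; charging one superframe to each hop (a node can issue its forwarded PROP, or its APRV, in its transmit slot of the following superframe), the downward and upward sweeps together cost at most $2\times\diameter$ superframes. In Stage-2, the same downward-then-upward pattern governs UPDATE and ACK, but now Requirement-1 forces each node to reserve one superframe to send UPDATE to its children and a second to collect their ACKs before it may start the new superframe; charging two superframes per hop over a depth of at most $\diameter$ gives at most $2\times\diameter$ superframes for Stage-2. Adding the two contributions yields at most $4\times\diameter$ superframes, i.e. $\sigma \le 4\times\diameter\times P$.

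For the lower bound, I would argue that Part-2 can never finish in fewer than two superframes: regardless of how shallow the propagation tree is, Stage-2 must complete at least one UPDATE-then-ACK exchange, and by Requirement-1 the root's new starting slot lies two superframes after the UPDATE is issued (one superframe to transmit UPDATE, one to receive the ACK). Hence $\sigma \ge 2\times P$, reflecting the irreducible two-superframe cost that Stage-2 imposes on even the simplest topology.

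The main obstacle I anticipate is justifying the per-hop superframe charges rigorously rather than heuristically. In particular, I must confirm that a node always obtains a transmit opportunity within one superframe, so that each propagation hop truly costs only one superframe, and that the timeout $T_{O}$ together with the possible overlap (pipelining) of forwarding and ACKing neither inflates the upper-bound count nor deflates the lower-bound count. Relating the depth of the root's propagation tree to $\diameter$ — bounding the root's eccentricity by $\diameter$ rather than by $|V|$ — is the other point that needs care, since a loose argument here would replace $\diameter$ with the node count and break the claimed inequality.
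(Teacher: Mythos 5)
Your proposal is correct and follows essentially the same route as the paper: decompose Part-2 into its two stages, charge at most one superframe per propagation hop, bound the propagation depth by the diameter $\diameter$ to get $\sigma \le 4\times\diameter\times P$, and extract a two-superframe minimum for the lower bound. The only cosmetic difference is that you anchor the $2\times P$ lower bound entirely in Stage-2's Requirement-1 (UPDATE then ACK before the new period can start), whereas the paper obtains it as one best-case pipelined superframe for each of Stage-1 and Stage-2; both yield the claimed bound.
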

\begin{proof}
We first consider Stage-1 of Part-2.  We bound the number of superframes a root node requires to receive an APRV from every neighbor after initiating a PROP message. In the best case, this can be done in only one superframe if the transmission of PROP happens successively from root node to the farthest node, and from parents to children. Then within the same superframe, after a PROP message is received by the farthest node, the transmission of APRV messages occurs in the exact opposite sequence of PROP's transmission, i.e., from the farthest node to root node, children to parents. However, without this specific transmission order, it may take up to at most $ \diameter \times P$ slots to propagate a PROP message to the farthest node from the root node and another $ \diameter \times P$ slots for the root node to collect all APRV messages. This happens when the hop-distance between root and the farthest node equals the network diameter $\diameter$. Thus, the number of time slots PCP-TDMA takes to perform Stage-1 of Part-2 is at least $P$, and at most $ 2\times\diameter\times P$. Similarly, we have the same results for Stage-2 of Part-2 for the transmission of UPDATE and ACK messages. Therefore, the number of slots required by Part-2 is bounded by $[2\times P, 4\times \diameter\times P]$.
\end{proof}

\section{\label{SIM}Evaluation}
We evaluate the performance of PCP-TDMA using MatGraph \cite{matgraph}, a Matlab toolkit that works with simple graphs.   Each node is assumed to have a dedicated antenna for every neighbor.    We conduct our experiments over bipartite or random topologies. For experiments that use bipartite graphs, we construct a linear and a grid network consisting of 16 nodes. For random topologies, we place 50 nodes randomly on a $100m \times 100m$ square area in order to study the impact of two parameters: node degree and transmission range. The degree of each node varies from 5 to 15.   We vary the transmission range of nodes from 30m to 100m. 

We compare PCP-TDMA against Algo-2 \cite{kwanmaxcut}, a centralized MTR link scheduler, and two distributed algorithms JazzyMAC \cite{nedevJazzyMAC}, and ROMA \cite{ROMAbao}.  The aim of Algo-2 is to generate a bipartite graph with maximal matching by placing nodes into two sets: Set1 and Set2. Initially, all nodes are included in Set1 while Set2 is empty.  Algo-2 then moves a node from Set1 to Set2 if doing so increases the number of active links. After processing all nodes, a max cut is derived. In time slot $i$, nodes in Set1 transmit to nodes in Set2. Then, upon removing all activated links from nodes in Set1 to those in Set2 from the network, the above process is repeated on the revised topology to generate the next max cut.  Algo-2 terminates when it has scheduled all links.   The superframe length is equal to the total number of max cuts obtained by Algo-2. This is the minimal superframe length that ensures every link is activated at least once.  

JazzyMAC initially assigns tokens to nodes according to a centralized scheme; i.e., graph coloring.  A node becomes a transmitter when it holds the token of all its incident links. When a node finishes its transmission, it passes the token to the other end of the link.  ROMA is a distributed scheme where nodes are synchronized and uses two-hop topology information to compute a schedule.  ROMA evenly and randomly splits nodes into transmitters and receivers in each slot, which are paired together for data transmission. Then ROMA solves any contention according to the priority of each node, where the priority is calculated based on node ID. The node with the highest priority among contending neighbors has the right to transmit without conflicts in that time slot.

In our experiments, we compare metrics such as superframe length and the number of concurrent active links.  In addition, we also measure the number of time slots and signaling messages required for PCP-TDMA to reach convergence. All presented results are an average of 20 simulation runs; each with a different topology. The error bars shown in the line graphs indicate 95\% confidence interval of the mean value. 

\subsection{\label{dresults}Node Degree}
Figure \ref{degree} (a) shows the superframe length calculated when nodes have 5 to 15 neighbors. We can see that all the algorithms generate a relatively short superframe with similar length except ROMA.  The superframe of ROMA is approximately two times more than that of other algorithms. This is because ROMA splits all nodes into transmitters and receivers randomly in each time slot.  However, the other three algorithms construct a max cut comprising of unscheduled links, and thus they schedule the maximal number of unscheduled links in each slot which leads to a shorter superframe.  Interestingly, when nodes have a degree of seven, we observe that JazzyMAC generates a superframe with length that outperforms the centralized algorithm ALGO-2 by one.  The reason is because in the initial greedy graph coloring stage of JazzyMAC, it occasionally generates the optimal graph coloring. However, ALGO-2 in these cases fails to derive the optimal max cut. 

\begin{figure}[htbp] 
	\centering
	\includegraphics[width=3.4in]{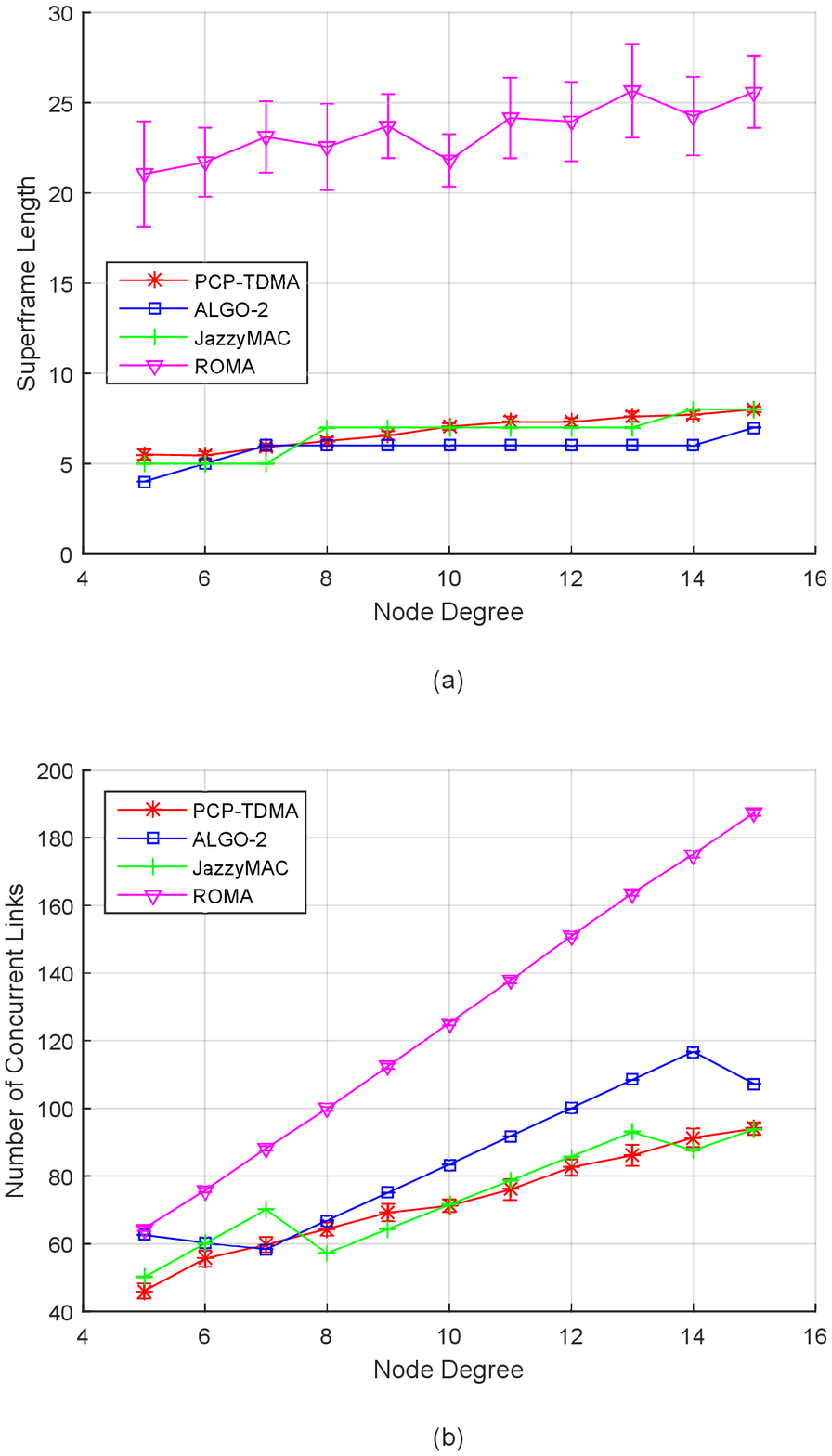}
	\caption{Performance of different algorithms under increasing node degrees. (a) Superframe length. (b) Number of concurrent links.}
	\label{degree}
\end{figure}
 
Figure \ref{degree} (b) shows the average number of concurrent links in each slot with increasing node degrees.  When using ROMA, the number of concurrent links increased from 64.4 to 187.3.  Specifically, it significantly outperforms other tested algorithms when node degree increases from 6 to 15. This is because ROMA does not remove any links after links are scheduled. As we increase the node degree, the number of existing links increases. Hence, ROMA has more chances to repeatedly schedule previously activated links as opportunistic links. However, for PCP-TDMA, we do not schedule opportunistic links because we want to fill every slot with the most number of unscheduled links. For ALGO-2, scheduled links are intentionally removed from the network. In JazzyMAC, opportunistic links do not exist because of its token scheme. As a result, these three algorithms have poorer performance in terms of the number of activated links in each slot. 

We also note that for PCP-TDMA, ALGO-2 and JazzyMAC, the product of superframe length and number of concurrent links per slot equals $|E|$. With each increment in node degree, $|E|$ increases by 50. Thus, we can see that if the superframe length does not increase, the number of concurrent links rises linearly. For example, when the node degree increases from seven to 14, the superframe length of ALGO-2 in Figure \ref{degree} (a) remains at six, while the number of concurrent links of ALGO-2 in Figure \ref{degree}(b) increases linearly. The increment value 8.3 is the result of $\frac{50}{6}$, where 50 is the number of added links, and six is the number of slots in a superframe. Moreover, we notice that when the superframe length increases by one or more, the number of concurrent links decreases. For instance, the superframe length of JazzyMAC increases from five to seven in Figure \ref{degree} (a) when the node degree increases from seven to eight. With more slots in a superframe, from Figure \ref{degree}(b), there will be fewer links in each slot on average.   

\subsection{\label{trresults}Transmission Range}
Next, we conduct experiments on networks with 50 nodes; these nodes are randomly located on a $100 \times 100 m^2$ area.  We vary the transmission range from 30 to 100 meter. 
\begin{figure}[htbp] 
	\centering
	\includegraphics[width=3.4in]{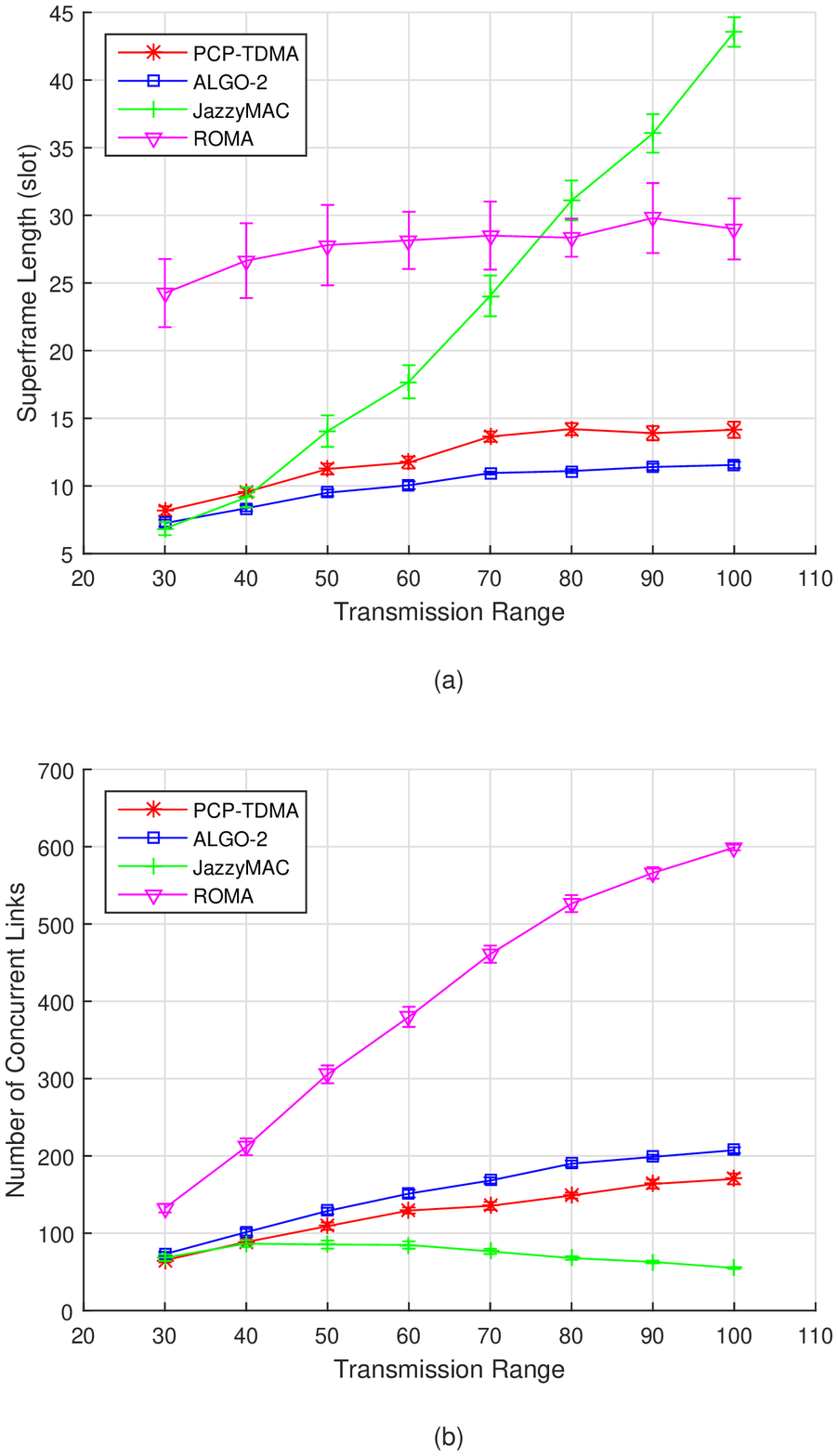}
	\caption{Performance of different algorithms under increasing transmission range, (a) Superframe length, and (b) Number of concurrent links.}
	\label{transrange}
\end{figure}
 
Figure \ref{transrange} (a) shows that ALGO-2 generates the shortest superframe length, which gradually increases from 6.2 to 11.5. The key reason for this increase is because more links are established between nodes as the transmission range increases. The superframe length of PCP-TDMA is close to that of ALGO-2; i.e., PCP-TDMA produces superframes with at most 3.1 additional slots.  For ROMA, its superframe length is fairly high at around 27. This is due to the same reason explained in Section \ref{dresults} where ROMA schedules links by randomly splitting nodes. The superframe length of JazzyMAC is similar with ALGO-2 and PCP-TDMA when the transmission range is 30 to 40m. However, from 40m onwards, JazzyMAC shows a sharp increase in superframe length. This is because in JazzyMAC a node is allowed to transmit on all its links only after it has the token of all its links. Consequently, in some cases, time slots are wasted while waiting for tokens to return. Thus, JazzyMAC's performance degrades when nodes need to collect more tokens from more neighbors.

Figure \ref{transrange} (b) compares the average number of concurrent links per time slot.  ROMA results in the most concurrent links because of opportunistic links. This value increases from 132.5 to 598.9 because of the growth in the total number of links. The number of concurrent links of when using ALGO-2 and PCP-TDMA doubles when the transmission range reaches 100m from 30m. For longer transmission ranges, the difference between ALGO-2 and PCP-TDMA is at most 20\%. For JazzyMAC, the number of concurrent links reduced by half when the transmission range increases from 40 to 100m. Thus, JazzyMAC is not suitable for random topologies when nodes have many neighbors. Note, when the transmission range reaches 100m, the network becomes almost fully connected. Thus, all results remain the same from 100m onwards.

\subsection{\label{bp}Bipartite Graphs}
 
Figure \ref{bipgraph} compares the superframe length generated by different algorithms for bipartite networks such as line and grid topology. We can see that both ALGO-2 and JazzyMAC have the shortest superframe length; i.e., two slots. This indicates that every node is acting as a transmitter in one time slot and as a receiver in the next slot; see  Figure \ref{bipexample} (a) for an example, where the number next to links indicates the $x$-th time slot of one superframe. The reason for the shorter superframe is because ALGO-2 constructs max cuts and JazzyMAC applies optimal graph coloring during bootup. PCP-TDMA yields superframe close to four slots. This is because in PCP-TDMA, links are scheduled in a random order. Take Figure \ref{bipexample} (b) as an example.  If link $e_{AB}$, $e_{BA}$, $e_{CD}$ and $e_{DC}$ are scheduled first as per the indicated slot number, then link $e_{CB}$ and $e_{BC}$ require two additional slots for interference-free transmission. In conclusion, the superframe of PCP-TDMA has an upper bound of four slots for bipartite graphs. 

\begin{figure}[htbp] 
	\centering
	\includegraphics[width=3.0in]{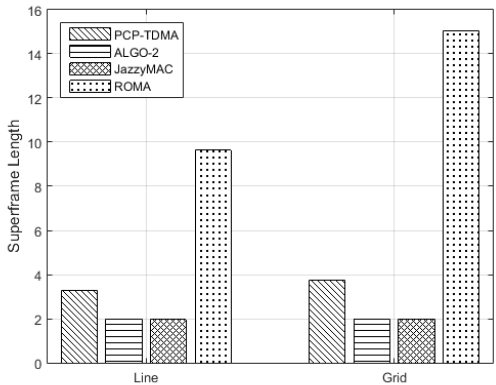}
	\caption{Superframe length for bipartite networks}
	\label{bipgraph}
\end{figure}
\begin{figure}[htbp] 
	\centering
	\includegraphics[width=3in]{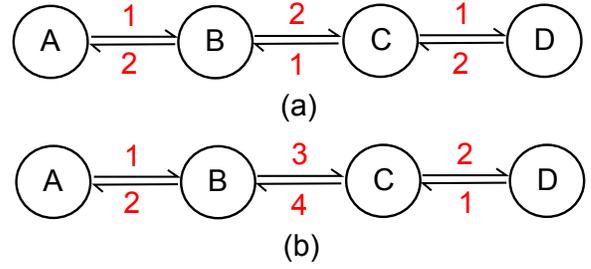}
	\caption{Example schedules for a line topology}
	\label{bipexample}
\end{figure}
         
\subsection{\label{convergence time}The impact of initial period on convergence time}
\begin{figure}[htbp] 
	\centering
	\includegraphics[width=3.4in]{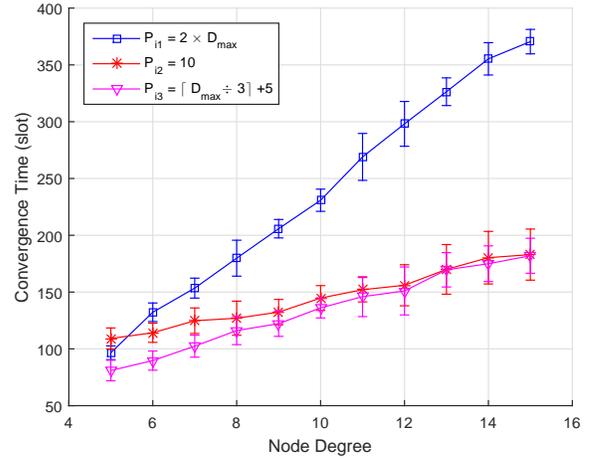}
	\caption{Convergence time under increasing node degrees}
	\label{convtime}
\end{figure}
 
In this section, we study that how does the initial period value $P_i$, i.e., the length of superframe $\mathcal{SF}_1$, affects the convergence time of PCP-TDMA. To do this, we compare the convergence time when we set the initial period $P_i$ to three different values. Figure \ref{convtime} illustrates the average number of time slots required for PCP-TDMA to reach convergence when using the following initial periods: $P_{i1}$, $P_{i2}$ and $P_{i3}$. Here, $P_{i1}$ is equal to $2 \times D_{max} $. This is the upper bound of the number of slots required to schedule every link; see Proposition \ref{iniP} in Section \ref{ANA}. We then set $P_{i2}$ to a constant of 10, and $P_{i3}=\lceil D_{max}/3 \rceil +5$. We perform this simulation on a 50-node network, with node degree increasing from five to 15. Overall, we see a rising trend in convergence time as node degree increases. This is because with increasing links, PCP-TDMA requires longer time to schedule every link. 

Next, we compare the convergence time when using different $P_{i}$ values. The three figures started at a similar value, around 100 slots. However, the convergence time when using $P_{i1}$ then rises significantly to 370 slots, whereas the number of slots when using $P_{i2}$ and $P_{i3}$ rose steadily to reach just 182. The reason is that, with the increase in node degree, the difference between the $P_{i1}$ and the final period $P_f$ increases rapidly, where $P_f$ is the length of superframes used by nodes when convergence is reached. This means when links are scheduled initially, they tend to be randomly scattered in a longer superframe. Thus nodes require more time to improve their reserved slots repeatedly, in order to reduce the superframe length from $P_{i1}$ to $P_f$.  Using $P_{i2}$ as the initial value results in the minimum increase, about 70 slots. The reason is that when node degree goes up, the difference between $P_{i2}$ and $P_f$ decreases as $P_{i2}$ is a constant. However, using a fixed integer as $P_i$ is not practical because $P_f$ increases proportionally to the maximum node degree. This positive correlation between $P_f$ and the maximum degree can be seen in Section \ref{dresults}. This means that if we set $P_i$ to a smaller value than $P_f$, PCP-TDMA can never compute a superframe because interference between links always exists. Thus we must ensure that $P_i$ is greater than $P_f$. From these results, we configure $P_i$ to be $P_{i3}$, which ensures a relatively small and constant difference from $P_f$. We can see in Figure \ref{convtime}, among the three $P_i$s, the convergence time when using $P_i3$ is the shortest, from 80 to 182 time slots.    

\subsection{\label{signalling}The number of signalling messages}
\begin{figure}[htbp] 
	\centering
	\includegraphics[width=3.4in]{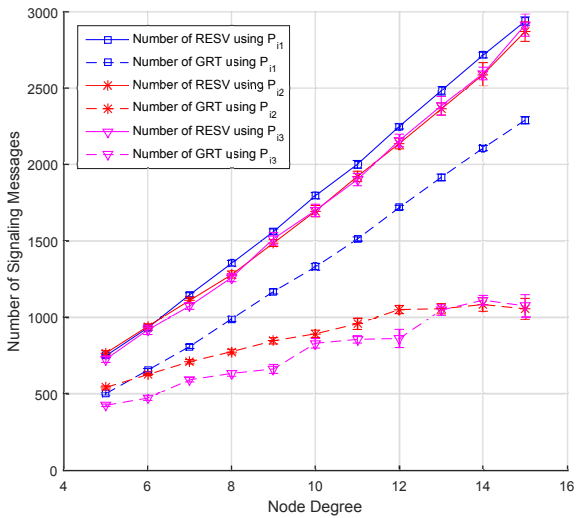}
	\caption{Total number of RESV and GRT messages transmitted to reach {\em converged state}}
	\label{numsigs}
\end{figure}
 
In this section, we study the number of signaling messages, including RESV and GRT, used by all nodes to reach {\em converged state}. The network configuration is the same as Section \ref{convergence time}. Figure \ref{numsigs} compares the total number of message exchanges incurred by PCP-TDMA to stabilize the schedule for all links when using defferent initial period values with increasing node degree.  We see that the number of GRT messages when using $P_{i1}$ is significantly higher than using $P_{i2}$ and $P_{i3}$; in fact, up to 50\% more. This is because $P_{i1}$ is greater than the other two $P_i$ values. With a longer initial superframe, there is a higher successful rate of reserving a random slot. For the same reason, the GRT messages when using $P_{i2}$ is also more than that of $P_{i3}$when the node degree is five to 12. Note, from 13 node degrees onwards, these two curves overlap because $P_{i2}=P_{i3}$ when degree is 13 to 15. In addition, we also notice that the number of GRT message when using $P_{i3}$ shows a step shape because $P_{i3}$ is a staircase function. 

On the other hand, we find that the numbers of RESV messages when using $P_{i1}$, $P_{i2}$ and $P_{i3}$ are very close. The reason is that, although nodes using $P_{i1}$ can easily reserve a slot for their links using fewer RESV messages as compared to using $P_{i2}$ and $P_{i3}$, they require more RESVs to improve reserved slots. Interstingly, the number of RESV messages rises linearly with increasing node degree. This indicates that nodes reserve 3.5 times on average for each of their links to allocate every link in the ideal slot, which is the earliest feasible slot for the particular link. This value does not increase with increasing node degree.  

\bibliographystyle{IEEEtran}
\bibliography{reference}
\end{document}